\documentclass[journal]{IEEEtran}
\ifCLASSINFOpdf

\else

\fi
\usepackage{amsmath}
\usepackage{makeidx}  % allows for indexgeneration
% \makeindex          % be prepared for an author index
\usepackage{algorithm}
\usepackage{algorithmic}
\usepackage{graphicx}
\usepackage{subfigure}
\usepackage{epstopdf}
\usepackage{bm}
\usepackage{cite}
\usepackage{stfloats}

\usepackage{amssymb}
\setcounter{tocdepth}{3}
\usepackage{graphicx}

\usepackage{url}

\usepackage{tabularx,booktabs}
\newcolumntype{C}{>{\centering\arraybackslash}X} % centered version of "X" type
\setlength{\extrarowheight}{3pt}
\usepackage{lipsum}

\usepackage{makecell} % Xhline Xcline

\usepackage{graphicx}
\usepackage{color}
\newtheorem{thm}{Theorem}
\newtheorem{rem}{Remark}

\newtheorem{pos}{Proposition}
\newtheorem{proof}{proof}

\textheight 10.0in
\voffset -0.2in
\addtolength{\textwidth}{.19 in}
\hyphenation{op-tical net-works semi-conduc-tor}

\begin{document}

\title{Sum Capacity Characterization of Pinching Antennas-assisted Multiple Access Channels}

\author{Guangji Chen,
        Qingqing Wu,
        Kangda Zhi,
        Xidong Mu,
        and Yuanwei Liu, \emph{Fellow, IEEE} \vspace{-22pt}
        \thanks{Guangji Chen is with Nanjing University of Science and Technology, Nanjing 210094, China (email:
                guangjichen@njust.edu.cn). Qingqing Wu is with Shanghai Jiao Tong University, 200240, China
                (e-mail: qingqingwu@sjtu.edu.cn). Kangda Zhi is with Technische Universit at Berlin, 10587 Berlin, Germany (e-mail: k.zhi@tu-berlin.de). Xidong Mu is with Queen's University Belfast, Belfast, BT3 9DT, U.K. (email: x.mu@qub.ac.uk). Yuanwei Liu is with the University of Hong Kong, Hong Kong (e-mail: yuanwei@hku.hk).}}

\maketitle
\vspace{-3pt}
\begin{abstract}
Pinching antenna system (PASS) has recently shown its promising ability to flexibly reconfigure wireless channels via dynamically adjusting the positions of pinching antennas over a dielectric waveguide, termed as \emph{pinching beamforming.} This paper studies the fundamental limit of the sum rate for a PASS-assisted multiple access channel, where multiple users transmit individual messages to a base station under the average power constraint. To this end, a \emph{dynamic pinching beamforming} setup is conceived, where multiple pinching beamforming vectors are employed in a transmission period and the capacity-achieving non-orthogonal multiple access (NOMA) based scheme is considered. For the ideal case with an asymptotically large number of pinching beamforming vectors, the optimal transmission scheme is unveiled to carry out alternating transmission among each user whose channel power gain is maximized with the tailored pinching beamforming. This implies that NOMA is not needed for achieving the sum capacity and the required optimal number of pinching beamforming vectors is equal to the number of users. With this insight, the corresponding sum rate is derived in closed-form expression, which serves as the upper bound of the sum rate. Inspired by this result, a lower bound of the sum rate under an arbitrarily finite number of pinching beamforming vectors is obtained. Numerical results validate our theoretical findings and also illustrate the practical significance of using dynamic pinching beamforming to improve the sum rate.
\end{abstract}

\begin{IEEEkeywords}
multiple access channel, pinching antenna, pinching beamforming, sum capacity.
\end{IEEEkeywords}

\IEEEpeerreviewmaketitle

\vspace{-10pt}
\section{Introduction}

\vspace{-2pt}
Driven by the recent advancements toward sixth-generation (6G) networks, various reconfigurable-antenna technologies, such as intelligent reflecting surfaces \cite{wu2025intelligent}, movable antennas \cite{zhu2023modeling}, and fluid antennas \cite{wong2020fluid}, have attracted significant research interests for their capability to proactively reconfigure wireless channels. Despite their potential to improve the network performance, these reconfigurable-antenna technologies suffer the limitations on mitigating large-scale path loss due to the confined apertures  with a few wavelengths. To address this limitation, a new type of reconfigurable-antenna technologies, namely pinching antenna system (PASS), has been introduced by NTT DOCOMO \cite{ding2025flexible}. Specifically, a PASS consists of specialized transmission lines, i.e., dielectric waveguide, and simple dielectric materials, called pinching antennas (PAs). By adjusting positions of PAs along a dielectric waveguide, termed as \emph{pinching beamforming}, the short-range line-of-sight (LoS) links between transceivers can be created, which offers a viable solution to mitigate the path loss via flexible antenna deployments \cite{liu2025pinching}.
%Besides, PAs are also low-cost and easy to deploy since it only requires the  dielectric materials rather than complex radio-frequency circuitry.

To fully unleash the potential of PASS for channel reconfigurations, the joint optimization of pinching beamforming and resource allocation has received growing interest to improve the wireless network performance \cite{xu2025rate,tegos2025minimum,ouyang2025array,ren2025pinching,ouyang2025capacity}. Among others, exploring the information-theoretic channel capacity limits of PASS is crucial to understand its maximum achievable gains. To unveil the fundamental limit of PASS under a single-user setup, the authors of \cite{ouyang2025array} investigated the maximum array gain and identified the optimal number of PAs. Nevertheless, for a multi-user PASS, the interplay between the pinching beamforming and multiple access schemes should be carefully captured to achieve the performance upper bound. For this consideration, the authors of \cite{ren2025pinching} studied the power minimization problem in a downlink two-user PASS by considering both non-orthogonal multiple access (NOMA) and orthogonal multiple access (OMA) schemes. From the information-theoretic viewpoint, the capacity region of a two-user PASS-aided wireless communication system was characterized based on the capacity-achieving NOMA scheme \cite{ouyang2025capacity}. Although these works \cite{ouyang2025array,ren2025pinching,ouyang2025capacity} provided concrete insights on the performance limits of PASS, the results are limited to a two-user case, which may not be directly applicable to a general multi-user setup. For example, considering a fundamental PASS-assisted multiple access channel (MAC) under the average power constraint \cite{shang2008sum}, the optimal transmission scheme for achieving the sum capacity is still unknown.

\begin{figure}[!t]
\centering
\includegraphics[width= 0.4\textwidth]{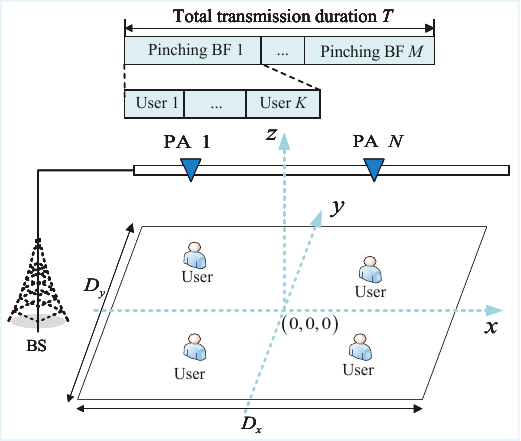}
\DeclareGraphicsExtensions.
\caption{A PASS-assisted MAC with dynamic pinching beamforming}
\label{model}
\vspace{-16pt}
\end{figure}

To fill in this gap, we study a PASS-assisted MAC, where multiple users transmit individual messages to a base station (BS) under the average power constraint. For revealing the sum rate limit, we consider a \emph{dynamic pinching beamforming} setup, where the positions of PAs can be reconfigured multiple times in a transmission period, which enables the flexibility in enhancing the multi-user diversity. In each time slot (TS) with a fixed pinching beamforming vector, the capacity-achieving NOMA scheme is employed. Under this setup, we jointly optimize the pinching beamforming and resource allocation over time to unveil the fundamental sum rate limit. We first focus on the ideal case with an asymptotically large number of available pinching beamforming vectors. In this case, we theoretically prove that the optimal capacity-achieving transmission scheme is to carry out alternating transmission among each user whose channel power gain is maximized with the tailored pinching beamforming. The result sheds light on the sum-capacity optimality of OMA in PASS-assisted  MAC, which indicates that NOMA, i.e., successive interference cancellation (SIC)-based receiver, is not required, thereby resulting in a simplified transceiver structure. Then, a lower bound of the sum rate under an arbitrarily finite number of pinching beamforming vectors is constructed. Numerical results are provided to verify our theoretical findings and to unveil the performance-cost tradeoff in exploiting dynamic pinching beamforming.

\vspace{-8pt}
\section{System Model }
As illustrated in Fig. \ref{model}, we consider a fundamental PASS-assisted MAC, where $K$ users send their individual messages to a single-waveguide PASS-based BS over a finite transmission period, which is denoted by ${\cal T} \buildrel \Delta \over = \left( {0,T} \right]$ with duration $T > 0$. In a three-dimensional (3D) coordination system, the $K$ users indexed by $k \in {\cal K} \buildrel \Delta \over = \left\{ {1, \ldots ,K} \right\}$ are randomly distributed in a
rectangular region centered at the origin with side lengths of ${D_x}$ and ${D_y}$. Let ${{\bf{w}}_k} = \left[ {{x_{u,k}},{y_{u,k}},0} \right]$ denote the location of user $k$. The locations of $K$ users remain unchanged for $\forall t \in {\cal T}$. Without loss of generality, we assume that the waveguide is deployed parallel to the $x$-axis with a height of $d$. The waveguide is equipped with $N$ PAs to enhance the uplink transmission via implementing pinching beamforming. The set of the index for all PAs is denoted by ${\cal N} \buildrel \Delta \over = \left\{ {1, \ldots ,N} \right\}$. Let ${{\bf{v}}_0} = [ - {D_x}/2,0,d]$ denote the location of the feed point at the waveguide.

To fully harness the potential gain introduced by PASS and characterize the fundamental limit of the sum rate, we consider a setup of \emph{dynamic pinching beamforming}, where the positions of PAs can be reconfigured multiple times in the transmission period ${\cal T}$. In particular, we divide the whole transmission period into $M$ TSs and each of them is denoted by ${{\cal T}_m} \buildrel \Delta \over = \left( {\sum\nolimits_{i = 1}^{m - 1} {{t_i},\sum\nolimits_{i = 1}^m {{t_i}} } } \right]$, $\forall m \in {\cal M} \buildrel \Delta \over = \left\{ {1, \ldots ,M} \right\}$. The duration of the $m$-th TS is denoted by ${{t_m}}$, which satisfies $\sum\nolimits_{m = 1}^M {{t_m} = T}$. In the $m$-th TS, the position of PA $n$ is adjusted to ${{\bf{v}}_n}\left[ m \right] = \left[ {{v_n}\left[ m \right],0,d} \right], \forall n \in {\cal N}$. The vector of $x$-axes for all PAs is denoted by ${\bf{\bar v}}\left[ m \right] = {\left[ {{v_1}\left[ m \right], \ldots ,{v_N}\left[ m \right]} \right]^{\mathop{T}\nolimits} } \in {\mathbb{R}^{N \times 1}}$, which is refered as the \emph{pinching beamforming pattern} employed in the $m$-th TS. We assume that the PAs can flexibly adjust their positions continuously along the waveguide with a length of ${L_{\max }}$. Let $\Delta$ denote the minimum inter-antenna spacing to avoid the EM mutual coupling. Without loss of generality, the feasible set of ${\bf{\bar v}}\left[ m \right]$ is given by
\begin{align}\label{feasible_set}
{\cal V} \buildrel \Delta \over = \left\{ {{v_n}\left| {0 \le {v_n} \le {L_{\max }},{v_{n + 1}} - {v_n} \ge \Delta ,\forall n} \right.} \right\}.
\end{align}
To facilitate the characterization of the maximum achievable performance, we consider the free-space LoS channel model, which is valid due to the ability of PASS to create the LoS-dominant channel. Under the LoS channel assumption, the wireless channel coefficient from user $k$ to the $n$-th PA in the $m$-th TS is given by
\begin{align}\label{channel_cofficient}
{h_k}\left( {{{\bf{v}}_n}\left[ m \right]} \right) = \frac{{{\eta ^{1/2}}{e^{ - j\frac{{2\pi }}{\lambda }\left\| {{{\bf{v}}_n}\left[ m \right] - {{\bf{w}}_k}} \right\|}}}}{{\left\| {{{\bf{v}}_n}\left[ m \right] - {{\bf{w}}_k}} \right\|}},
\end{align}
where $\lambda $ denotes the wavelength and $\eta  = \frac{{{\lambda ^2}}}{{16{\pi ^2}}}$. Hence, the wireless channel vector from user $k$ to all PAs in the $m$-th TS is ${{\bf{h}}_k}\left( {{\bf{\bar v}}\left[ m \right]} \right) = {\left[ {{h_k}\left( {{{\bf{v}}_1}\left[ m \right]} \right), \ldots ,{h_k}\left( {{{\bf{v}}_N}\left[ m \right]} \right)} \right]^T}$.

Then, we introduce the uplink transmission process in each TS. Let ${s_k}\left[ m \right] \sim {\cal C}{\cal N}\left( {0,1} \right)$ denote the information symbol of user $k$ in the $m$-th TS, $\forall m \in {\cal M}$. The received signal at PA $n$ in the $m$-th TS can be expressed as
\begin{align}\label{received_signal_per_antenna}
{y_n}\left[ m \right] \!\!=\!\! \sum\nolimits_{k = 1}^K {{h_k}\left( {{{\bf{v}}_n}\left[ m \right]} \right){p_k}\left[ m \right]} {s_k}\left[ m \right] \!\!+\!\! {z_n}\left[ m \right],
\end{align}
where ${{p_k}\left[ m \right]}$ and ${z_n}\left[ m \right]\sim {\cal C}{\cal N}\left( {0,{\sigma ^2}} \right)$ denote the transmit power of user $k$ and the additive Gaussian white noise (AWGN) with power ${\sigma ^2}$ in the $m$-th TS, respectively. Let ${\bf{g}}\left( {{\bf{\bar v}}\left[ m \right]} \right) = {\left[ {{e^{ - j\frac{{2\pi }}{{{\lambda _{\rm{g}}}}}\left\| {{{\bf{v}}_1}\left[ m \right] - {{\bf{v}}_0}} \right\|}}, \ldots ,{e^{ - j\frac{{2\pi }}{{{\lambda _{\rm{g}}}}}\left\| {{{\bf{v}}_N}\left[ m \right] - {{\bf{v}}_0}} \right\|}}} \right]^T}$ denote the in-waveguide channel vector from all PAs to the feed point, where ${\lambda _g} = \lambda /{n_{{\rm{eff}}}}$ is the guided wavelength with ${n_{{\rm{eff}}}}$ denoting the effective refractive index of the waveguide \cite{ding2025flexible}. Accordingly, the aggregated signal received at the feed point is given by
\begin{align}\label{received_signal_feed_point}
y\left[ m \right] &= \sum\nolimits_{n = 1}^N {{e^{ - j\frac{{2\pi }}{{{\lambda _{\rm{g}}}}}\left\| {{{\bf{v}}_n}\left[ m \right] - {{\bf{v}}_0}} \right\|}}{y_n}\left[ m \right]}\nonumber\\
& = \sum\limits_{k = 1}^K {{{\bf{g}}^T}\left( {{\bf{\bar v}}\left[ m \right]} \right)} {{\bf{h}}_k}\left( {{\bf{\bar v}}\left[ m \right]} \right){p_k}\left[ m \right]{s_k}\left[ m \right] \!+\! z\left[ m \right],
\end{align}
where $z\left[ m \right] = {\sum\nolimits_{n = 1}^N {\left[ {{\bf{g}}\left( {{\bf{\bar v}}\left[ m \right]} \right)} \right]} _n}{z_n}\left[ m \right]$ satisfying $z\left[ m \right] \sim {\cal C}{\cal N}\left( {0,N{\sigma ^2}} \right)$. Under the given transmit power ${p_k}\left[ m \right]$, it is well-known that the capacity-achieving scheme is NOMA transmission with SIC at the BS receiver \cite{shang2008sum}. By employing NOMA, the achievable sum rate in the $m$-th TS is given by
\begin{align}\label{sumrate_perTS}
R\left[ m \right] \!\!=\!\! {\log _2}\left( {1 \!\!+\!\! \frac{{\sum\nolimits_{k = 1}^K {{p_k}\left[ m \right]{{\left| {{{\bf{g}}^T}\left( {{\bf{\bar v}}\left[ m \right]} \right){{\bf{h}}_k}\left( {{\bf{\bar v}}\left[ m \right]} \right)} \right|}^2}} }}{{N{\sigma ^2}}}} \right).
\end{align}

Under the dynamic pinching beamforming configuration $\left\{ {{\bf{\bar v}}\left[ m \right],\forall m \in {\cal M}} \right\}$ over all TSs, the average sum rate in the considered transmission period ${\cal T}$ can be written as
\begin{align}\label{average_rate}
\bar R = \frac{1}{T}\sum\nolimits_{m = 1}^M {{t_m}R\left[ m \right]}.
\end{align}
Suppose that each user has a maximum average transmit power budget ${P_{{\rm{ave}}}}$. Thus, we have $\left( {\sum\nolimits_{m = 1}^M {{p_k}\left[ m \right]} {t_m}} \right)/T \le {P_{{\rm{ave}}}}, \forall k$. We are interested in characterizing the sum capacity of the considered PASS-assisted MAC, which corresponds to maximize the average sum rate defined in \eqref{average_rate} via jointly optimizing the dynamic pinching beamforming $\left\{ {{\bf{\bar v}}\left[ m \right],\forall m \in {\cal M}} \right\}$, the transmit power $\left\{ {{p_k}\left[ m \right],\forall k,m} \right\}$ of users, and the time allocation $\left\{ {{t_m},\forall m} \right\}$, subject to the average power constraints and continuous antenna position constraints. As a result, the corresponding optimization problem can be formulated as follows:
\begin{subequations}\label{C1}
\begin{align}
\label{C1-a}\mathop {\max }\limits_{\left\{ {{\bf{\bar v}}\left[ m \right],{p_k}\left[ m \right],{t_m}} \right\}}\;\;&\frac{1}{T}\sum\nolimits_{m = 1}^M {{t_m}R\left[ m \right]}\\
\label{C1-b}{\rm{s.t.}}\;\;\;\;\;\;\;\;\;\;&\frac{1}{T}\sum\nolimits_{m = 1}^M {{p_k}\left[ m \right]} {t_m} \le {P_{{\rm{ave}}}}, ~\forall k \in {\cal K},\\
\label{C1-c}&{{p_k}\left[ m \right]} \ge 0,~\forall k \in {\cal K}, {\forall m \in {\cal M}},\\
\label{C1-d}&{v_n}\left[ m \right] \in {\cal V}, ~\forall n \in {\cal N}, {\forall m \in {\cal M}},\\
\label{C1-d}&\sum\nolimits_{m = 1}^M {{t_m} = T}.
\end{align}
\end{subequations}
It is observed that problem \eqref{C1} cannot be decomposed into $M$ independent subproblems directly since the transmit power is coupled over $M$ TSs. Without loss of optimality, all the variables should be optimized jointly to maximize the average sum rate. Besides, problem \eqref{C1} is non-convex and challenging to solve optimally since the optimization variables are highly coupled in \eqref{C1-a} and \eqref{C1-b}.

\section{Sum Capacity Characterization}
In this section, we characterize the sum capacity the PASS-assisted MAC by solving problem \eqref{C1}. We first obtain the fundamental limit of the sum rate by solving problem \eqref{C1} optimally under the case of $M \to \infty$. Then, we further derive a lower bound of the sum rate with an arbitrarily finite $M$.
\subsection{Sum Rate with $M \to \infty$}
In this subsection, we first investigate an ideal case of $M \to \infty$, which serves as an upper bound of the achievable sum rate. In this case, we replace the discrete index $\left[ m \right]$ by the continuous index $\left( t \right)$. Then, problem \eqref{C1} can be equivalently expressed as the following problems with continuous time variables.
\begin{subequations}\label{C2}
\begin{align}
\label{C2-a}\mathop {\max }\limits_{\left\{ {{\bf{\bar v}}\left( t \right),{p_k}\left( t \right)} \right\}}\;\;&\frac{{\int_0^T {{{\log }_2}\left( {1 + \frac{{\sum\nolimits_{k = 1}^K {{p_k}\left( t \right){{\left| {{\gamma _k}\left( {{\bf{\bar v}}\left( t \right)} \right)} \right|}^2}} }}{{N{\sigma ^2}}}} \right)} dt}}{T}\\
\label{C2-b}{\rm{s.t.}}\;\;\;\;\;\;&\frac{1}{T}\int_0^T {{p_k}\left( t \right)dt \le {P_{{\rm{ave}}}}}, ~\forall k \in {\cal K},\\
\label{C2-c}&{{p_k}\left( t \right)} \ge 0,~\forall k \in {\cal K}, t \in {\cal T},\\
\label{C2-d}&{v_n}\left( t \right) \in {\cal V}, ~\forall n \in {\cal N}, t \in {\cal T},
\end{align}
\end{subequations}
where ${\gamma _k}\left( {{\bf{\bar v}}\left( t \right)} \right) = {{\bf{g}}^T}\left( {{\bf{\bar v}}\left( t \right)} \right){{\bf{h}}_k}\left( {{\bf{\bar v}}\left[ t \right]} \right)$. Problem \eqref{C2} is highly non-convex, which contains infinite optimization variables over the continuous interval ${\cal T}$. To simplify the optimization of \eqref{C2}, we have the following proposition.
\begin{pos}
Problem \eqref{C2} satisfies the time-sharing condition defined in \cite{yu2006dual}.
\end{pos}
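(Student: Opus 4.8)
The plan is to verify the time-sharing condition of \cite{yu2006dual} directly by exploiting the fact that both the objective \eqref{C2-a} and the average-power constraints \eqref{C2-b} are \emph{time-additive} functionals of the decision functions ${\bf{\bar v}}(\cdot)$ and $\{p_k(\cdot)\}$. I would regard \eqref{C2} with a generic per-user budget vector $\{P_k\}$ in place of the common ${P_{{\rm{ave}}}}$, so that the condition can be checked componentwise. Recall that the condition requires the following: for any two budget vectors $\{P_k^{(1)}\}$ and $\{P_k^{(2)}\}$ with corresponding optimal solutions, and any $\theta\in[0,1]$, there exists a feasible solution whose per-user average power is at most $\theta P_k^{(1)}+(1-\theta)P_k^{(2)}$ for every $k$ and whose average sum rate is at least $\theta$ times the first optimal value plus $(1-\theta)$ times the second.

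First I would take two optimal tuples $\big({\bf{\bar v}}^{(1)}(t),\{p_k^{(1)}(t)\}\big)$ and $\big({\bf{\bar v}}^{(2)}(t),\{p_k^{(2)}(t)\}\big)$, achieving average sum rates $\bar R^{(1)}$ and $\bar R^{(2)}$, and construct a \emph{time-concatenated} candidate: on $(0,\theta T]$ set ${\bf{\bar v}}(t)={\bf{\bar v}}^{(1)}(t/\theta)$ and $p_k(t)=p_k^{(1)}(t/\theta)$, while on $(\theta T,T]$ set ${\bf{\bar v}}(t)={\bf{\bar v}}^{(2)}\big((t-\theta T)/(1-\theta)\big)$ and $p_k(t)=p_k^{(2)}\big((t-\theta T)/(1-\theta)\big)$. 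Intuitively, this runs a time-compressed copy of the first scheme over a fraction $\theta$ of ${\cal T}$ and a time-compressed copy of the second over the remaining fraction. Feasibility is immediate, since ${v_n}(t)\in{\cal V}$ is a pointwise constraint preserved under time reparametrization, and $p_k(t)\ge 0$ is preserved as well.

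Next I would carry out the change of variables $\tau=t/\theta$ on $(0,\theta T]$ and $\tau=(t-\theta T)/(1-\theta)$ on $(\theta T,T]$. Under these substitutions, $\tfrac1T\int_0^T p_k(t)\,dt$ becomes $\theta\cdot\tfrac1T\int_0^T p_k^{(1)}(\tau)\,d\tau+(1-\theta)\cdot\tfrac1T\int_0^T p_k^{(2)}(\tau)\,d\tau\le\theta P_k^{(1)}+(1-\theta)P_k^{(2)}$, and, since ${\gamma_k}({\bf{\bar v}}(t))$ evaluated at a reparametrized time equals ${\gamma_k}$ evaluated at the corresponding original time, the average-rate integral becomes exactly $\theta\bar R^{(1)}+(1-\theta)\bar R^{(2)}$. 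Hence the constructed solution satisfies the required power bound and attains the convex combination of the two optimal values \emph{with equality}, which is even stronger than the inequality the time-sharing condition demands; this completes the verification.

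The argument is essentially mechanical once the concatenation is written down, and it uses no convexity of the per-slot rate in ${\bf{\bar v}}$ or in $p_k$. The only genuine point to watch is that the construction must produce measurable functions, so that the integrals in \eqref{C2-a}--\eqref{C2-b} remain well defined; this holds because the two reparametrizations are affine and the two optimal solutions are feasible (hence integrable). The reason the condition holds is simply that the period ${\cal T}$ can be split arbitrarily in time --- exactly the flexibility that dynamic pinching beamforming provides in the $M\to\infty$ regime --- so I anticipate no real obstacle beyond careful bookkeeping.
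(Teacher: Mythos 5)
Your proposal is correct and follows essentially the same route as the paper's own proof: time-concatenating affinely reparametrized copies of the two optimal solutions, then using a change of variables to show the power constraints hold with the averaged budget and the average rate equals the convex combination of the two optimal values. The only cosmetic differences are your use of per-user budget vectors and the explicit remark on measurability, neither of which changes the argument.
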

\begin{proof}
Let $\left\{ {p_{k,x}^*\left( t \right),{\bf{\bar v}}_{x}^*\left( t \right)} \right\}$ and $\left\{ {p_{k,y}^*\left( t \right),{\bf{\bar v}}_{y}^*\left( t \right)} \right\}$ denote the optimal solutions to problem \eqref{C2} with the average power budgets ${P_{{\rm{ave,}}x}}$ and ${P_{{\rm{ave,}}y}}$, respectively. Their corresponding objective values are denoted by $\bar R_x^*$ and $\bar R_y^*$, respectively. Thus, $\left\{ {p_{k,x}^*\left( t \right),{\bf{\bar v}}_{x}^*\left( t \right)} \right\}$ and $\left\{ {p_{k,y}^*\left( t \right),{\bf{\bar v}}_{y}^*\left( t \right)} \right\}$ satisfy constraints \eqref{C2-d} and $\frac{1}{T}\int_0^T {p_{k,i}^*\left( t \right)dt \le {P_{{\rm{ave,}}i}}} ,i \in \left\{ {x,y} \right\}$. To demonstrate that problem \eqref{C2} satisfies the time-sharing condition, we need to show that there exists a feasible solution $\left\{ {{p_{k,z}}\left( t \right),{{{\bf{\bar v}}}_z}\left( t \right)} \right\}$ such that it (i) satisfies the average power constraint with $v{P_{{\rm{ave,}}x}} + \left( {1 - v} \right){P_{{\rm{ave,}}y}}$; and (ii) achieves an equal or a higher average sum rate than $v\bar R_x^* + \left( {1 - v} \right)\bar R_y^*$ under any $0 \le v \le 1$. To this end, we construct $\left\{ {{p_{k,z}}\left( t \right),{{{\bf{\bar v}}}_z}\left( t \right)} \right\}$ as follows:
\begin{align}\label{constructed_solution1}
{p_{k,z}}\left( t \right) = \left\{ {\begin{array}{*{20}{c}}
{p_{k,x}^*\left( {t/v} \right)~~~~~~~~~~~~~~~~~~{\rm{                    0}} \le t \le vT}\\
{p_{k,y}^*\left( {\left( {t - vT} \right)/\left( {1 - v} \right)} \right){\rm{ }}~~vT \le t \le T},
\end{array}} \right.
\end{align}
\begin{align}\label{constructed_solution2}
{{{\bf{\bar v}}}_z}\left( t \right) = \left\{ {\begin{array}{*{20}{c}}
{{\bf{\bar v}}_x^*\left( {t/v} \right)~~~~~~~~~~~~~~~~~~~~~~{\rm{                    0}} \le t \le vT}\\
{{\bf{\bar v}}_y^*\left( {\left( {t - vT} \right)/\left( {1 - v} \right)} \right){\rm{ }}~~~~vT \le t \le T}.
\end{array}} \right.
\end{align}
Denote $R_i^*\left( t \right)$ as the instantaneous rate under the solution of $\left\{ {p_{k,i}^*\left( t \right),{\bf{\bar v}}_{i}^*\left( t \right)} \right\}$, $i \in \left\{ {x,y} \right\}$. Based on \eqref{constructed_solution1} and \eqref{constructed_solution2}, the objective value ${{\bar R}_z}$ achieved by $\left\{ {{p_{k,z}}\left( t \right),{{{\bf{\bar v}}}_z}\left( t \right)} \right\}$ can be obtained as follows:
\begin{align}\label{rate_z}
{{\bar R}_z} &= \frac{1}{T}\int_0^{vT} {R_x^*\left( {\frac{t}{v}} \right)dt}  + \frac{1}{T}\int_{vT}^T {R_x^*\left( {\frac{{t - vT}}{{1 - v}}} \right)dt}\nonumber\\
& = \frac{v}{T}\int_0^T {R_x^*\left( \omega  \right)d\omega  + \frac{{1 - v}}{T}} \int_0^T {R_y^*\left( \omega  \right)d\omega }\nonumber\\
& = v\bar R_x^* + \left( {1 - v} \right)\bar R_y^*,
\end{align}
which indicates that the objective value achieved by $\left\{ {{p_{k,z}}\left( t \right),{{{\bf{\bar v}}}_z}\left( t \right)} \right\}$ equals to $v\bar R_x^* + \left( {1 - v} \right)\bar R_y^*$. Similarly, it can be shown that $\frac{1}{T}\int_0^T {{p_{k,z}}\left( t \right)dt \le v{P_{{\rm{ave,}}x}} + \left( {1 - v} \right){P_{{\rm{ave,}}y}}}$, which thus completes the proof.
\end{proof}

\textbf{Proposition 1} indicates that problem \eqref{C2} meets the time-sharing condition \cite{yu2006dual}, which leads to the strong duality holding for problem \eqref{C2}. Therefore, the optimal solution to problem \eqref{C2} can be obtained via its dual problem. By employing the Lagrange duality method, the Lagrange dual function of problem \eqref{C2} is given by
\begin{align}\label{dual function}
{f_{\rm{1}}}\left( {\left\{ {{\lambda _k}} \right\}} \right) = &\mathop {\max }\limits_{\left\{ {{\bf{\bar v}}\left( t \right),{p_k}\left( t \right)} \right\}} {{\cal L}_1}\left( {{\bf{\bar v}}\left( t \right),\left\{ {{p_k}\left( t \right)} \right\},\left\{ {{\lambda _k}} \right\}} \right)\nonumber\\
&{\rm{s.t.}}~\eqref{C2-c},\eqref{C2-d},
\end{align}
where
\begin{align}\label{Lagrange_function}
\begin{array}{*{20}{l}}
{{\cal L}_1}\left( {{\bf{\bar v}}\left( t \right),\left\{ {{p_k}\left( t \right)} \right\},\left\{ {{\lambda _k}} \right\}} \right)\\
{ = \frac{1}{T}\int_0^T {{{\log }_2}\left( {1 + \frac{{\sum\nolimits_{k = 1}^K {{p_k}\left( t \right){{\left| {{\gamma _k}\left( {{\bf{\bar v}}\left( t \right)} \right)} \right|}^2}} }}{{N{\sigma ^2}}}} \right)} dt}\\
 ~~ - \sum\nolimits_{k = 1}^K {{\lambda _k}\left( {\frac{1}{T}\int_0^T {{p_k}\left( t \right)} dt - {P_{{\rm{ave}}}}} \right)}
\end{array}
\end{align}
is the Lagrange function with ${\left\{ {{\lambda _k}} \right\}}$ denoting the non-negative Lagrange multipliers associated with \eqref{C2-b}. Accordingly, the dual problem of problem \eqref{C2} is given by
\begin{align}\label{dual problem}
\mathop {\min }\limits_{\left\{ {{\lambda _k}} \right\}} {f_{\rm{1}}}\left( {\left\{ {{\lambda _k}} \right\}} \right), ~~{\rm{s.t.}}~{\lambda _k} \ge 0.
\end{align}
Based on the strong duality, problem \eqref{C2} can be optimally solved by solving its dual problem. With the optimal dual variables, denoted by $\left\{ {\lambda _k^\star} \right\}$, at hand, suppose that there are $\Pi  \ge 1$ optimal solutions to problem \eqref{dual function}, denoted by $\left\{ {{\bf{\bar v}}_\varpi ^\star\left( t \right),p_{k,\varpi }^\star\left( t \right)} \right\}_{\varpi  = 1}^\Pi$. The optimal primal solution of problem \eqref{C2} needs to be reconstructed via time sharing among all these solutions, i.e., allocating a particular duration for each solution in $\left\{ {{\bf{\bar v}}_\varpi ^\star\left( t \right),p_{k,\varpi }^\star\left( t \right)} \right\}_{\varpi  = 1}^\Pi$. In the following, we derive the optimal solution to problem \eqref{C2} in closed-form expressions by exploiting its inherent structures.

Define ${\bf{\bar v}}_k^*\left( t \right) = \mathop {\arg \max }\limits_{{\bf{\bar v}}\left( t \right)} {\left| {{\gamma _k}\left( {{\bf{\bar v}}\left( t \right)} \right)} \right|^2}$. It can be found that the optimal value of problem \eqref{C2} is upper bounded by
\begin{align}\label{rate_upperbound}
\bar R = {\log _2}\left( {1 + \frac{{\sum\nolimits_{k = 1}^K {{P_{{\rm{ave}}}}{{\left| {{\gamma _k}\left( {{\bf{\bar v}}_k^*\left( t \right)} \right)} \right|}^2}} }}{{N{\sigma ^2}}}} \right).
\end{align}
This is because that $g\left( {x,y} \right) = x{\log _2}\left( {1 + y/x} \right)$ is a concave function with respect to $x$ and $y$. Then, we construct a feasible solution of problem \eqref{C2} to show that the bound $\bar R$ in \eqref{rate_upperbound} is achievable. To this end, we have the following proposition.
\begin{pos}
Under the given dual variables
\begin{align}\label{dual_varialbes_optimal}
\lambda _k^* = \frac{{{{\left| {{\gamma _k}\left( {{\bf{\bar v}}_k^*\left( t \right)} \right)} \right|}^2}}}{{N{\sigma ^2} + \sum\nolimits_{k = 1}^K {{P_{{\rm{ave}}}}{{\left| {{\gamma _k}\left( {{\bf{\bar v}}_k^*\left( t \right)} \right)} \right|}^2}} }},\forall k,
\end{align}
there are $K$ solutions of \eqref{dual function}, denoted by $\left\{ {{\Gamma _k},\forall k \in {\cal K}} \right\}$, can achieve the value of $\bar R$ in \eqref{rate_upperbound}. In particular, the $K$ solutions are given by ${\Gamma _k} = \left\{ {{\bf{\bar v}}\left( t \right),\left\{ {{p_k}\left( t \right)} \right\}} \right\} = \left\{ {{\bf{\bar v}}_k^*\left( t \right),\left\{ {{{\bf{0}}_{k - 1}},p_k^*\left( t \right),{{\bf{0}}_{K - k}}} \right\}} \right\}$, where
\begin{align}\label{optimal_solution1}
p_k^*\left( t \right) = \frac{{\sum\nolimits_{i = 1}^K {{P_{{\rm{ave}}}}{{\left| {{\gamma _i}\left( {{\bf{\bar v}}_i^*\left( t \right)} \right)} \right|}^2}} }}{{{{\left| {{\gamma _k}\left( {{\bf{\bar v}}_k^*\left( t \right)} \right)} \right|}^2}}},\forall k.
\end{align}
\end{pos}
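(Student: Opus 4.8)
The plan is to verify that, with the multipliers $\left\{\lambda_k^*\right\}$ in \eqref{dual_varialbes_optimal}, each $\Gamma_k$ is feasible for problem \eqref{dual function} and is a global maximizer of the Lagrangian ${\cal L}_1(\cdot,\cdot,\left\{\lambda_k^*\right\})$ with value $\bar R$. Since $\bar R$ is already an upper bound on the optimum of \eqref{C2} by \eqref{rate_upperbound}, this identifies $f_1(\left\{\lambda_k^*\right\})=\bar R$ as the dual optimum, and the strong duality guaranteed by \textbf{Proposition 1} then yields that a time-shared combination of $\Gamma_1,\dots,\Gamma_K$ is primal-optimal and attains $\bar R$. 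Feasibility of each $\Gamma_k$ is immediate: $|\gamma_k(\cdot)|^2$ is continuous and ${\cal V}$ is compact, so the maximizer ${\bf{\bar v}}_k^*(t)$ exists in ${\cal V}$, while $p_k^*(t)\ge 0$ from \eqref{optimal_solution1}, so constraints \eqref{C2-c}--\eqref{C2-d} hold; since the channel is static, ${\bf{\bar v}}_k^*(t)$ and $p_k^*(t)$ may moreover be taken constant in $t$, so the time integrals are trivial.

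For the maximization I would use that the integrand of ${\cal L}_1$ is separable in $t$, so that it suffices to maximize, at each $t$, the function $\phi({\bf{\bar v}},\left\{p_k\right\})={\log_2}\!\left(1+\tfrac{\sum_k p_k|\gamma_k({\bf{\bar v}})|^2}{N\sigma^2}\right)-\sum_k\lambda_k^* p_k$ over ${\bf{\bar v}}\in{\cal V}$ and $p_k\ge 0$, and then add the constant $\sum_k\lambda_k^* P_{\rm ave}$. First I would bound $|\gamma_k({\bf{\bar v}})|^2\le|\gamma_k({\bf{\bar v}}_k^*(t))|^2$ for every $k$ and every ${\bf{\bar v}}\in{\cal V}$. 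The crux of the argument is the structural fact that \eqref{dual_varialbes_optimal} makes $\lambda_k^*$ proportional to $|\gamma_k({\bf{\bar v}}_k^*(t))|^2$ with the \emph{same} proportionality constant for all $k$; consequently $\sum_k\lambda_k^* p_k$ depends on the power profile only through the aggregate effective receive power $\sum_k p_k|\gamma_k({\bf{\bar v}}_k^*(t))|^2$. Replacing $|\gamma_k({\bf{\bar v}})|^2$ also by its upper bound in the rate term, the pointwise objective then becomes a function of the single scalar $z$ equal to the ratio of $N\sigma^2+\sum_k p_k|\gamma_k({\bf{\bar v}}_k^*(t))|^2$ to $N\sigma^2+\sum_j P_{\rm ave}|\gamma_j({\bf{\bar v}}_j^*(t))|^2$, and the desired inequality ${\cal L}_1\le\bar R$ reduces to the elementary bound $\ln z\le z-1$, which holds with equality iff $z=1$.

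Finally I would read off the equality conditions: $z=1$ forces $\sum_k p_k(t)|\gamma_k({\bf{\bar v}}_k^*(t))|^2=P_{\rm ave}\sum_j|\gamma_j({\bf{\bar v}}_j^*(t))|^2$, and the bound $|\gamma_k({\bf{\bar v}})|^2\le|\gamma_k({\bf{\bar v}}_k^*(t))|^2$ must be tight for every active user; the single-user allocations $\Gamma_k$ meet both (user $k$ alone active at power $p_k^*(t)$ from \eqref{optimal_solution1}, with ${\bf{\bar v}}={\bf{\bar v}}_k^*(t)$), and equivalently the identity $\lambda_k^* p_k^*(t)=\sum_j\lambda_j^* P_{\rm ave}$ --- read directly off \eqref{dual_varialbes_optimal} and \eqref{optimal_solution1} --- makes the complementary-slackness terms cancel, giving ${\cal L}_1(\Gamma_k,\left\{\lambda_k^*\right\})=\bar R$ for every $k$. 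Hence $f_1(\left\{\lambda_k^*\right\})=\bar R$ and each of the $K$ solutions $\Gamma_k$ solves \eqref{dual function}. I expect the main subtlety to be handling the ``flatness'' of the optimal set of the inner problem --- the $\Gamma_k$ sit on the boundary of the feasible power region and share the optimum with a continuum of other power profiles, and it is the restriction to a single active user that pins ${\bf{\bar v}}$ down to ${\bf{\bar v}}_k^*(t)$; a minor bookkeeping point is that the base-$2$ logarithm contributes a $1/\ln 2$ factor that the multiplier must absorb, after which everything reduces to routine algebra with the closed forms of $\lambda_k^*$ and $p_k^*(t)$.
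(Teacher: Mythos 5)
Your proof is correct, but it takes a genuinely different route from the paper's. The paper fixes the pinching beamforming at ${\bf{\bar v}}_k^*\left( t \right)$, observes that the resulting Lagrangian is concave in the powers, and verifies the first-order (KKT) conditions $\partial {f_2}/\partial p_k^* = 0$ and $\partial {f_2}/\partial p_i^* \le 0$ for $i \ne k$ using the ratio $\lambda _i^*/\lambda _k^* = {\left| {{\gamma _i}\left( {{\bf{\bar v}}_i^*} \right)} \right|^2}/{\left| {{\gamma _k}\left( {{\bf{\bar v}}_k^*} \right)} \right|^2}$ together with ${\left| {{\gamma _i}\left( {{\bf{\bar v}}_k^*} \right)} \right|^2} \le {\left| {{\gamma _i}\left( {{\bf{\bar v}}_i^*} \right)} \right|^2}$, then evaluates ${\cal L}_1\left( {{\Gamma _k}} \right) = \bar R$; the maximization over ${\bf{\bar v}}\left( t \right)$ itself is left implicit. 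You instead prove a direct global bound: using the same two structural facts (the gain bound and the fact that $\lambda _k^* \propto {\left| {{\gamma _k}\left( {{\bf{\bar v}}_k^*} \right)} \right|^2}$ with a common constant), you scalarize the pointwise Lagrangian into a single variable $z$ and invoke $\ln z \le z - 1$, which shows ${\cal L}_1 \le \bar R$ over \emph{all} feasible $\left( {{\bf{\bar v}}\left( t \right),\left\{ {{p_k}\left( t \right)} \right\}} \right)$, with equality precisely at allocations like ${\Gamma _k}$; the complementary-slackness cancellation $\lambda _k^*p_k^* = \sum\nolimits_i {\lambda _i^*{P_{{\rm{ave}}}}}$ then gives the value $\bar R$, exactly as in the paper's \eqref{temp}. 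Your version buys a little more: it certifies that the ${\Gamma _k}$ are global maximizers of the inner problem \eqref{dual function} (i.e., $f_1\left( {\left\{ {\lambda _k^*} \right\}} \right) = \bar R$), which is what the subsequent time-sharing argument actually needs, whereas the paper only checks stationarity in the powers at a fixed beamforming. You also correctly flag the $1/\ln 2$ normalization that the stated $\lambda _k^*$ must absorb for exact stationarity under base-$2$ rates — a convention the paper silently adopts in \eqref{derivation} — and your equality-condition bookkeeping (single active user pins ${\bf{\bar v}}$ to ${\bf{\bar v}}_k^*$, $z = 1$ fixes the power level) matches the closed forms \eqref{dual_varialbes_optimal} and \eqref{optimal_solution1}.
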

\begin{proof}
We first focus on an arbitrary $k$, $k \in {\cal K}$. By setting ${{{\bf{\bar v}}}}\left( t \right) = {\bf{\bar v}}_k^*\left( t \right)$, we consider the optimization problem as
\begin{subequations}\label{C3}
\begin{align}
\label{C3-a}\mathop {\max }\limits_{\left\{ {{p_i}\left( t \right)} \right\}}\;\;&{{\cal L}_1}\left( {{\bf{\bar v}}_k^*\left( t \right),\left\{ {{p_i}\left( t \right)} \right\},\left\{ {\lambda _i^*} \right\}} \right)\\
\label{C3-b}{\rm{s.t.}}\;\;\;\;&{{p_i}\left( t \right)} \ge 0,~\forall i \in {\cal K}.
\end{align}
\end{subequations}
For problem \eqref{C3}, it is equivalent to
\begin{align}\label{power_optimization}
\mathop {\max }\limits_{\left\{ {{p_i}\left( t \right) \ge 0} \right\}} \; &{\log _2}\left( {1 \!+\! \frac{{\sum\nolimits_{i = 1}^K {{p_i}\left( t \right){{\left| {{\gamma _i}\left( {{\bf{\bar v}}_k^*\left( t \right)} \right)} \right|}^2}} }}{{N{\sigma ^2}}}} \right) \!-\!
 \sum\limits_{i = 1}^K {\lambda _i^*{p_i}\left( t \right)}.
\end{align}
Denote the objective function in \eqref{power_optimization} as ${f_2}\left( {\left\{ {{p_i}\left( t \right)} \right\}} \right)$. It is evident that ${f_2}\left( {\left\{ {{p_i}\left( t \right)} \right\}} \right)$ is concave with respect to ${\left\{ {{p_k}\left( t \right)} \right\}}$. By taking the derivative of $ {f_2}\left( {\left\{ {p_i^*\left( t \right)} \right\}} \right)$ with respect to ${p_i^*\left( t \right)}$, we have
\begin{align}\label{derivation}
\frac{{\partial {f_2}\left( {\left\{ {p_i^*\left( t \right)} \right\}} \right)}}{{\partial p_i^*\left( t \right)}} = \frac{{{{\left| {{\gamma _i}\left( {{\bf{\bar v}}_k^*\left( t \right)} \right)} \right|}^2}}}{{1 + \sum\nolimits_{i = 1}^K {p_i^*\left( t \right){{\left| {{\gamma _k}\left( {{\bf{\bar v}}_k^*\left( t \right)} \right)} \right|}^2}} }} - \lambda _i^*.
\end{align}
Since ${{\lambda _i^*} \mathord{\left/
 {\vphantom {{\lambda _i^*} {\lambda _k^* = }}} \right.
 \kern-\nulldelimiterspace} {\lambda _k^* = }}{{{{\left| {{\gamma _i}\left( {{\bf{\bar v}}_i^*\left( t \right)} \right)} \right|}^2}} \mathord{\left/
 {\vphantom {{{{\left| {{\gamma _i}\left( {{\bf{\bar v}}_i^*\left( t \right)} \right)} \right|}^2}} {{{\left| {{\gamma _k}\left( {{\bf{\bar v}}_k^*\left( t \right)} \right)} \right|}^2}}}} \right.
 \kern-\nulldelimiterspace} {{{\left| {{\gamma _k}\left( {{\bf{\bar v}}_k^*\left( t \right)} \right)} \right|}^2}}}$ and ${\left| {{\gamma _i}\left( {{\bf{\bar v}}_k^*\left( t \right)} \right)} \right|^2} \le {\left| {{\gamma _i}\left( {{\bf{\bar v}}_i^*\left( t \right)} \right)} \right|^2}$, $\forall i \ne k$, we obtain
 \begin{align}\label{derivation1}
\frac{{\partial {f_2}\left( {\left\{ {p_i^*\left( t \right)} \right\}} \right)}}{{\partial p_k^*\left( t \right)}} = 0,\frac{{\partial {f_2}\left( {\left\{ {p_i^*\left( t \right)} \right\}} \right)}}{{\partial p_i^*\left( t \right)}} \le 0,\forall i \ne k,
\end{align}
which directly leads to
\begin{align}\label{optimal_power}
p_k^*\left( t \right) = \frac{{{P_{{\rm{ave}}}}\sum\nolimits_{i = 1}^K {{{\left| {{\gamma _i}\left( {{\bf{\bar v}}_i^*\left( t \right)} \right)} \right|}^2}} }}{{{{\left| {{\gamma _k}\left( {{\bf{\bar v}}_k^*\left( t \right)} \right)} \right|}^2}}},p_i^*\left( t \right) = 0,\forall i \ne k.
\end{align}
Then, it is not difficult to show that
\begin{align}\label{temp}
{{\cal L}_1}\left( {{\Gamma _k},\left\{ {\lambda _i^*} \right\}} \right) \!\!=\!\! {\log _2}\left( {1 \!\!+\!\! \frac{{{P_{{\rm{ave}}}}\sum\nolimits_{k = 1}^K {{{\left| {{\gamma _k}\left( {{\bf{\bar v}}_k^*\left( t \right)} \right)} \right|}^2}} }}{{N{\sigma ^2}}}} \right),\forall k.
\end{align}
Thus, we complete the proof.
\end{proof}

Based on \textbf{Proposition 2}, we further show that ${\bar R}$ is achievable for problem \eqref{C2} by constructing its feasible solution via $\left\{ {{\Gamma _k},\forall k \in {\cal K}} \right\}$. By performing time-sharing among all $K$ solutions in $\left\{ {{\Gamma _k},\forall k \in {\cal K}} \right\}$ and letting ${\tau _k}$ denote the duration for solution ${{\Gamma _k}}$, we have the following problem:
\begin{subequations}\label{C4}
\begin{align}
\label{C4-a}\mathop {\max }\limits_{{{\tau _k}}}\;\;&\sum\nolimits_{k = 1}^K {{\tau _k}} {\log _2}\left( {1 + \frac{{p_k^*\left( t \right){{\left| {{\gamma _k}\left( {{\bf{\bar v}}_k^*\left( t \right)} \right)} \right|}^2}}}{{N{\sigma ^2}}}} \right)\\
\label{C4-b}{\rm{s.t.}}\;\;\;&\sum\nolimits_{k = 1}^K {{\tau _k}}  = T,\\
\label{C4-c}&\frac{1}{T}{\tau _k}p_k^*\left( t \right) \le {P_{{\rm{ave}}}},~\forall k \in {\cal K}.
\end{align}
\end{subequations}
It is evident that the optimal solution of problem \eqref{C4}, denoted by $\left\{ {\tau _k^*} \right\}$, is given by
\begin{align}\label{optimal_time_allocation}
\tau _k^* = T\frac{{{{\left| {{\gamma _k}\left( {{\bf{\bar v}}_k^*\left( t \right)} \right)} \right|}^2}}}{{\sum\nolimits_{i = 1}^K {{{\left| {{\gamma _i}\left( {{\bf{\bar v}}_i^*\left( t \right)} \right)} \right|}^2}} }},\forall k.
\end{align}
Finally, with $\left\{ {\tau _k^*} \right\}$ and $\left\{ {{\Gamma _k},\forall k \in {\cal K}} \right\}$ at hand, we divide the whole transmission period ${\cal T}$ into $K$ portions, denoted by periods ${{\cal T}_1}, \ldots ,{{\cal T}_K}$, where ${{\cal T}_k} = \left( {\sum\nolimits_{i = 1}^{k - 1} {\tau _i^*,\sum\nolimits_{i = 1}^k {\tau _i^*} } } \right]$. Then, we have the optimal solution of problem \eqref{C2}  in the following theorem.
\begin{thm}
The optimal solution of problem \eqref{C2} is ${{{\bf{\bar v}}}^*}\left( t \right) \!=\! {\bf{\bar v}}_k^*\left( t \right)$, $p_k^*\left( t \right) \!\!=\! \! \frac{{{P_{{\rm{ave}}}}\sum\nolimits_{i = 1}^K {{{\left| {{\gamma _i}\left( {{\bf{\bar v}}_i^*\left( t \right)} \right)} \right|}^2}} }}{{{{\left| {{\gamma _k}\left( {{\bf{\bar v}}_k^*\left( t \right)} \right)} \right|}^2}}},p_i^*\left( t \right) \!= \!0, i \ne k$, $\forall t \in {{\cal T}_k},k \in {\cal K}$ and the correspondingly achievable sum rate is ${\bar R}$.
\end{thm}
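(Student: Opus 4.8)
The plan is to show that the candidate solution stated in the theorem is feasible for problem \eqref{C2}, that it attains the objective value $\bar R$ of \eqref{rate_upperbound}, and therefore that it is optimal because $\bar R$ was already established as an upper bound; the duality machinery of \textbf{Proposition 1} and \textbf{Proposition 2} will be invoked to explain why this particular time-shared structure is exactly the one the dual reconstruction produces and, if a self-contained certificate is wanted, to close the argument via the KKT conditions instead.

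First I would isolate the two structural facts that drive everything. Because the user positions $\{{\bf w}_k\}$ are fixed throughout ${\cal T}$, the maximizer ${\bf{\bar v}}_k^*(t)=\arg\max_{{\bf{\bar v}}}|\gamma_k({\bf{\bar v}})|^2$ may be taken time-invariant and lying in ${\cal V}$; consequently $|\gamma_k({\bf{\bar v}}_k^*(t))|^2$ and, via \eqref{optimal_solution1}, $p_k^*(t)$ are constants on ${\cal T}_k$, which in particular rules out any gain from varying the pinching beamforming within a sub-period. The crucial identity is that $p_k^*(t)\,|\gamma_k({\bf{\bar v}}_k^*(t))|^2 = P_{\rm ave}\sum_{i=1}^{K}|\gamma_i({\bf{\bar v}}_i^*(t))|^2$ for every $k$, i.e., the effective received power in slot ${\cal T}_k$ does not depend on $k$. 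Using this together with \eqref{optimal_time_allocation}, feasibility is immediate: only user $k$ transmits on ${\cal T}_k$, so its time-averaged power over ${\cal T}$ equals $\frac{1}{T}\tau_k^*\,p_k^*(t)=\frac{|\gamma_k({\bf{\bar v}}_k^*)|^2}{\sum_i|\gamma_i({\bf{\bar v}}_i^*)|^2}\cdot\frac{P_{\rm ave}\sum_i|\gamma_i({\bf{\bar v}}_i^*)|^2}{|\gamma_k({\bf{\bar v}}_k^*)|^2}=P_{\rm ave}$, so \eqref{C2-b} holds with equality for all $k$; constraint \eqref{C2-d} holds since ${\bf{\bar v}}_k^*\in{\cal V}$; and $\sum_k\tau_k^*=T$, so the ${\cal T}_k$ partition ${\cal T}$. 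Evaluating the objective, the instantaneous rate on ${\cal T}_k$ equals $\log_2\!\big(1+P_{\rm ave}\sum_i|\gamma_i({\bf{\bar v}}_i^*)|^2/(N\sigma^2)\big)$, which is independent of $k$, so the average sum rate is $\frac{1}{T}\sum_k\tau_k^*\log_2(\cdot)=\log_2\!\big(1+P_{\rm ave}\sum_i|\gamma_i({\bf{\bar v}}_i^*)|^2/(N\sigma^2)\big)=\bar R$.

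Optimality then follows at once from \eqref{rate_upperbound}: a feasible solution that meets the upper bound is optimal. Alternatively, the argument can be presented entirely in the dual framework: by \textbf{Proposition 2} each $\Gamma_k$ maximizes the Lagrangian ${\cal L}_1(\cdot,\{\lambda_k^*\})$, the durations $\tau_k^*$ are precisely those making every average-power constraint tight (so complementary slackness holds, with $\lambda_k^*>0$), and \textbf{Proposition 1} supplies strong duality, whence the time-shared primal point is globally optimal. I do not anticipate a genuine obstacle — the work is essentially bookkeeping — but the one place that deserves explicit attention is the confluence of three requirements on the single vector of durations $\{\tau_k^*\}$: they must sum to $T$, they must satisfy each per-user average-power constraint, and they must yield exactly the value $\bar R$. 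This is not a coincidence but a consequence of the effective-received-power identity noted above, so I would make that dependence visible; indeed one can observe that the per-user power caps on the $\tau_k$ themselves sum to $T$, which forces $\{\tau_k^*\}$ uniquely. A minor caveat is that the divisions used throughout presuppose $|\gamma_k({\bf{\bar v}}_k^*)|^2>0$, which is automatic for the LoS channel model \eqref{channel_cofficient}.
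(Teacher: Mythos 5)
Your proposal is correct and follows essentially the same route as the paper: it takes the upper bound $\bar R$ in \eqref{rate_upperbound} as established, then certifies the stated time-shared solution by checking feasibility (each per-user average power constraint met with equality via $\tau_k^* p_k^*(t)/T = P_{\rm ave}$, $\sum_k \tau_k^* = T$, ${\bf \bar v}_k^* \in \mathcal{V}$) and showing the achieved rate equals $\bar R$, exactly the logic the paper carries out through \textbf{Proposition 2} and problem \eqref{C4}. Your direct verification using the constant effective-received-power identity $p_k^*(t)\left|\gamma_k\!\left({\bf \bar v}_k^*(t)\right)\right|^2 = P_{\rm ave}\sum_i \left|\gamma_i\!\left({\bf \bar v}_i^*(t)\right)\right|^2$ is a slightly more explicit bookkeeping of the same argument, with the dual/KKT remarks matching the paper's Propositions 1--2.
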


\begin{rem}
Under the average power constraint, \textbf{Theorem 1} implies that the sum capacity-achieving scheme is to carry out alternating transmission among each user whose channel power gain is maximized with the tailored pinching beamforming. Thus, the SIC based receiver is not required, which greatly simplifies the transceiver design. Moreover, \textbf{Theorem 1} also unveils that exploiting dynamic pinching beamforming under $M > K$ will provide no sum rate improvement. In other words, at most $K$ pinching beamforming patterns are sufficient to achieve the sum rate limit of the PASS-assisted  MAC. It is worth noting that the analytical results in this subsection are applicable to the case with an arbitrary $M$ satisfying $M \ge K$.
\end{rem}

For ${\bf{\bar v}}_k^*\left( t \right)$, it can be obtained by employing phase refinement based on the initial solution
\begin{align}\label{position_solution}
v_{n,k}^*\left( t \right) = {x_{u,k}} - \left( {\frac{{N - 1}}{2} - n + 1} \right)\Delta ,\forall n,k.
\end{align}
The details can be referred to \cite{xu2025rate}, which are limited due to page limitations. Under the special case of $N = 1$, the sum capacity is given by a closed-form expression as
\begin{align}\label{signle_antenna_rate}
{\bar R} = {\log _2}\left( {1 + \frac{{{P_{{\rm{ave}}}}}}{{{\sigma ^2}}}\sum\nolimits_{k = 1}^K {{{\left( {y_{u,k}^2 + {d^2}} \right)}^{ - 1}}} } \right).
\end{align}
\vspace{-8pt}
\subsection{Sum Rate with Finite $M$}
As mentioned in \textbf{Remark 1}, we only need to focus on the case of $M < K$ in this subsection. Inspired by the optimal solution structure described in \textbf{Theorem 1}, we reconstruct a high-quality solution of problem \eqref{C1} with $M < K$. Specifically, we first sort the maximum channel power gains of all users, i.e. $\left\{ {{\gamma _k}\left( {{\bf{\bar v}}_k^*\left( t \right)} \right),k \in {\cal K}} \right\}$,  in descending order, where $\Psi \left( k \right)$ denotes the order of user $k$. Then, each of the first $M - 1$ users, i.e., $\Psi \left( k \right) = 1, \ldots M - 1$, is assigned a dedicated pinching beamforming vector. All the rest $K - M + 1$ users share a common pinching beamforming vector ${{{{\bf{\bar v}}}_c}}$ to assist uplink transmission. Then, problem \eqref{C1} is transformed to
\begin{subequations}\label{C6}
\begin{align}
\label{C6-a}\mathop {\max }\limits_{{{{\bf{\bar v}}}_c},\left\{ {{t_m}} \right\}}&{t_M}{\log _2}\left( {1 + \frac{{{P_{{\rm{ave}}}}T\sum\nolimits_{\Psi \left( k \right) = M}^K {{{\left| {{\gamma _k}\left( {{{{\bf{\bar v}}}_c}} \right)} \right|}^2}} }}{{{t_M}N{\sigma ^2}}}} \right)\nonumber\\
&+ \sum\nolimits_{\Psi \left( k \right) = 1}^{M - 1} {{t_{\Psi \left( k \right)}}{{\log }_2}\left( {1 \!\!+\!\! \frac{{{P_{{\rm{ave}}}}T{{\left| {{\gamma _k}\left( {{\bf{\bar v}}_k^*} \right)} \right|}^2}}}{{{t_{\Psi \left( k \right)}}N{\sigma ^2}}}} \right)} \\
\label{C6-b}{\rm{s.t.}}\;\;\;&{\left[ {{{{\bf{\bar v}}}_c}} \right]_n} \in {\cal V}, ~\forall n \in {\cal N},\\
\label{C6-d}&\sum\nolimits_{\Psi \left( k \right) = 1}^{M - 1} {{t_{\Psi \left( k \right)}}}  + {t_M} \le T.
\end{align}
\end{subequations}
For problem \eqref{C6}, it is observed that the optimization of ${{{{\bf{\bar v}}}_c}}$ is equivalent to
\begin{align}\label{common_pinching_beamforming}
\mathop {\max }\limits_{{{{\bf{\bar v}}}_c}} \sum\nolimits_{\Psi \left( k \right) = M}^K {{{\left| {{\gamma _k}\left( {{{{\bf{\bar v}}}_c}} \right)} \right|}^2}} ~~~{\rm{s.t.}}~\eqref{C6-b}.
\end{align}
Although problem \eqref{common_pinching_beamforming} is non-convex, we invoke the element-wise alternating optimization-based approach to obtain its high-quality suboptimal solution. Specifically, each element ${\left[ {{{{\bf{\bar v}}}_c}} \right]_n}$  is optimized sequentially while keeping all other elements fixed. To this end, we have the following problem:
\begin{align}\label{element_wise}
\mathop {\max }\limits_{{{\left[ {{{{\bf{\bar v}}}_c}} \right]}_n}} \Upsilon \left( {{{\left[ {{{{\bf{\bar v}}}_c}} \right]}_n}} \right)~~~{\rm{s.t.}}~\eqref{C6-b},
\end{align}
where $\Upsilon \left( x \right)$ denotes the value of the objective function in \eqref{common_pinching_beamforming} when ${{{\left[ {{{{\bf{\bar v}}}_c}} \right]}_n}}$ is set to $x$ while other elements are fixed. Notice that problem \eqref{element_wise} contains a single variable and thus it can be solved optimally via a one-dimensional search. The procedure is applied to all PAs iteratively until convergence.

The optimized solution of ${{{{\bf{\bar v}}}_c}}$ is denoted by ${\bf{\bar v}}_c^\star$. By substituting ${\bf{\bar v}}_c^\star$ into \eqref{C6-a}, the optimization problem with respect to $\left\{ {{t_m}} \right\}$ is a convex problem, which can be solved optimally via standard convex optimization tools, such as CVX \cite{yu2006dual}.

\begin{figure}[t!]
\centering
\includegraphics[width=2.4in]{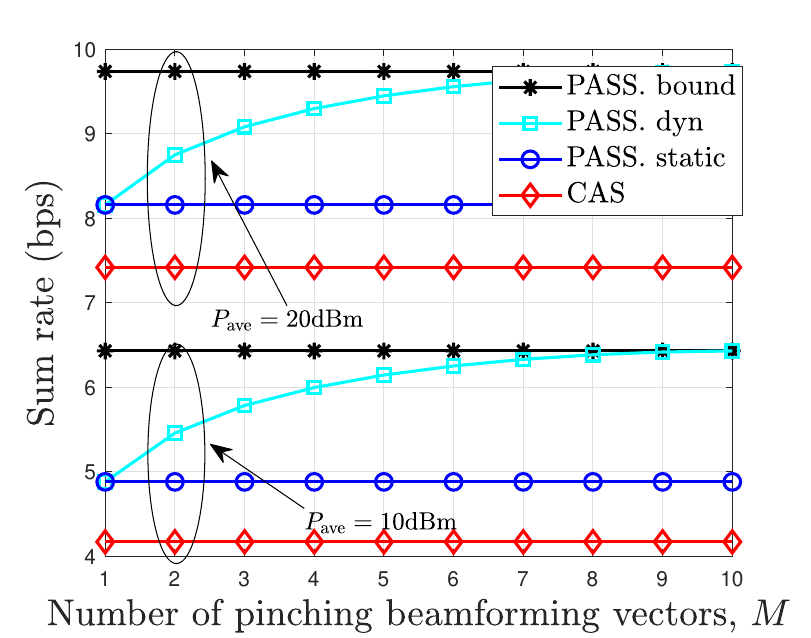}
\caption{{Sum rate versus the number of pinching beamforming vectors}}
\label{sum rate1}
\vspace{-8pt}
\end{figure}

\vspace{-8pt}
\section{Numerical Results}
\vspace{-3pt}
The simulation parameters are set as follows: the noise power is $-80$ dBm, the carrier frequency is 28 GHz, the height of the waveguide is $d = 3$ meter (m), the effective refractive index is ${n_{{\rm{eff}}}} = 1.4$, and the minimum inter-antenna space is $\Delta  = \lambda /2$. There are $K = 10$ users uniformly distributed in a rectangular region centered at the origin with the size of $100 \times 40$ ${{\rm{m}}^2}$. We compare the following schemes: 1) \textbf{CAS}: conventional antenna system, where the antennas are fixed at $\left( {0,0,d} \right)$ m; 2) \textbf{PASS. static}: A static beamforming is shared by all users for uplink transmission; 3) \textbf{PASS. dyn}: $M$ pinching beamforming vectors are employed to assist the uplink transmission; and 4) \textbf{PASS. bound}: the sum rate achieved in \textbf{Theorem 1}, which serves as the sum rate limit in a PASS assisted MAC.

Under the single-antenna setup, we plot the sum rate achieved by all the considered schemes versus the number of available pinching beamforming vectors in Fig. \ref{sum rate1}. Compared to \textbf{CAS}, it is observed that the achievable sum rate is significantly improved with the aid of PASS, which underscores the capability of PASS to reduce the path loss via the flexible position optimization. Moreover, compared to PASS with a static pinching beamforming, substantial sum rate gains can be achieved by employing dynamic pinching beamforming and the resulting performance gains become more pronounced as $M$ increases. The result highlights the potential of dynamically adjust PA positions to create favorable time-selective channels for the uplink transmission. It can be also observed that employing a total number of $M = 7$ pinching beamforming vectors is sufficient to approach the sum rate limit of PASS-assisted  MAC and further increases $M$ only introduces marginal performance gain. Note that employing more pinching beamfomring vectors incurs extra controlling overhead. The result unveils an interesting performance-cost tradeoff in exploiting dynamic pinching beamforming.

\begin{figure}[t!]
\centering
\includegraphics[width=2.4in]{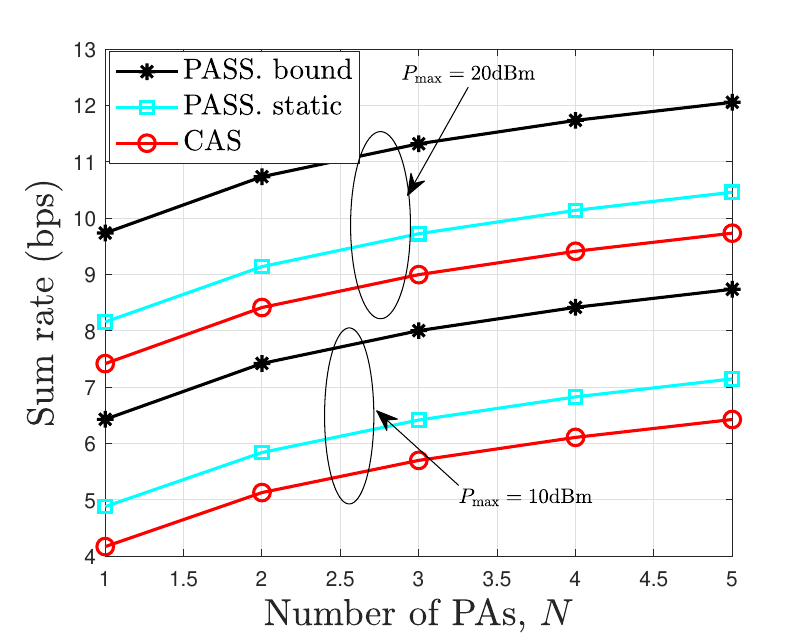}
\caption{{Sum rate versus the number of PAs}}
\label{sum rate2}
\vspace{-8pt}
\end{figure}

In Fig. \ref{sum rate2}, we plot the sum rate versus the number of PAs to evaluate the sum rate limit under the multi-antenna setup. It is observed that the sum rates achieved by all the schemes increase as $N$ increases. The result is expected since using more antennas provides additional degree of freedoms to reconfigure channels. Besides, the sum rate of the capacity-achieving scheme of PASS significantly outperforms other benchmark schemes, which further confirms employing dynamic pinching beamforming can leverage more favorable channels for efficient uplink transmission.

\vspace{-8pt}
\section{Conclusion}
This  paper studied the fundamental limit of the sum rate in PASS-assisted MAC. Theoretical analysis unveiled that the optimal transmission scheme is to carry out alternating transmission among each user whose channel power gain is maximized with the tailored pinching beamforming. The result sheds light on the sum-capacity optimality of OMA in PASS-assisted  MAC. Then, a lower bound of the sum rate under a finite number of pinching beamforming vectors was constructed. Numerical results validated our theoretical findings and also demonstrated the fundamental performance-cost tradeoff in exploiting dynamic pinching beamforming.

% trigger a \newpage just before the given reference

\bibliographystyle{IEEEtran}
% argument is your BibTeX string definitions and bibliography database(s)
\vspace{-8pt}
\bibliography{IEEEabrv,myref}

% Generated by IEEEtran.bst, version: 1.13 (2008/09/30)
\begin{thebibliography}{10}
\providecommand{\url}[1]{#1}
\csname url@samestyle\endcsname
\providecommand{\newblock}{\relax}
\providecommand{\bibinfo}[2]{#2}
\providecommand{\BIBentrySTDinterwordspacing}{\spaceskip=0pt\relax}
\providecommand{\BIBentryALTinterwordstretchfactor}{4}
\providecommand{\BIBentryALTinterwordspacing}{\spaceskip=\fontdimen2\font plus
\BIBentryALTinterwordstretchfactor\fontdimen3\font minus
  \fontdimen4\font\relax}
\providecommand{\BIBforeignlanguage}[2]{{%
\expandafter\ifx\csname l@#1\endcsname\relax
\typeout{** WARNING: IEEEtran.bst: No hyphenation pattern has been}%
\typeout{** loaded for the language `#1'. Using the pattern for}%
\typeout{** the default language instead.}%
\else
\language=\csname l@#1\endcsname
\fi
#2}}
\providecommand{\BIBdecl}{\relax}
\BIBdecl

\bibitem{wu2025intelligent}
Q.~Wu, G.~Chen, Q.~Peng, W.~Chen, Y.~Yuan, Z.~Cheng, J.~Dou, Z.~Zhao, and
  P.~Li, ``{Intelligent reflecting surfaces for wireless networks: Deployment
  architectures, key solutions, and field trials},'' \emph{IEEE Wireless
  Commun.}, Early Access, 2025, doi: 10.1109/MWC.001.250002.

\bibitem{zhu2023modeling}
L.~Zhu, W.~Ma, and R.~Zhang, ``{Modeling and performance analysis for movable
  antenna enabled wireless communications},'' \emph{IEEE Trans. Wireless
  Commun.}, vol.~23, no.~6, pp. 6234--6250, Jun. 2024.

\bibitem{wong2020fluid}
K.-K. Wong, A.~Shojaeifard, K.-F. Tong, and Y.~Zhang, ``{Fluid antenna
  systems},'' \emph{IEEE Trans. Wireless Commun.}, vol.~20, no.~3, pp.
  1950--1962, Mar. 2021.

\bibitem{ding2025flexible}
Z.~Ding, R.~Schober, and H.~V. Poor, ``{Flexible-antenna systems: A
  pinching-antenna perspective},'' \emph{IEEE Trans. Commun.}, Early Access,
  2025, doi:10.1109/TCOMM.2025.3555866.

\bibitem{liu2025pinching}
Y.~Liu, Z.~Wang, X.~Mu, C.~Ouyang, X.~Xu, and Z.~Ding, ``{Pinching-antenna
  systems (PASS): Architecture designs, opportunities, and outlook},''
  \emph{arXiv preprint arXiv:2501.18409}, 2025.

\bibitem{xu2025rate}
Y.~Xu, Z.~Ding, and G.~K. Karagiannidis, ``{Rate maximization for downlink
  pinching-antenna systems},'' \emph{IEEE Wireless Commun. Lett.}, vol.~14,
  no.~5, pp. 1431--1435, May 2025.

\bibitem{tegos2025minimum}
S.~A. Tegos, P.~D. Diamantoulakis, Z.~Ding, and G.~K. Karagiannidis, ``{Minimum
  data rate maximization for uplink pinching-antenna systems},'' \emph{IEEE
  Wireless Commun. Lett.}, vol.~14, no.~5, pp. 1516--1520, May 2025.

\bibitem{ouyang2025array}
C.~Ouyang, Z.~Wang, Y.~Liu, and Z.~Ding, ``{Array gain for pinching-antenna
  systems (PASS)},'' \emph{IEEE Commun. Lett.}, vol.~29, no.~6, pp. 1471--1475,
  Jun. 2025.

\bibitem{ren2025pinching}
Q.~Ren, X.~Mu, S.~Lin, and Y.~Liu, ``{Pinching-Antenna Systems (PASS) Meet
  Multiple Access: NOMA or OMA?}'' \emph{arXiv preprint arXiv:2506.13490},
  2025.

\bibitem{ouyang2025capacity}
C.~Ouyang, Z.~Wang, Y.~Liu, and Z.~Ding, ``{Capacity Characterization of
  Pinching-Antenna Systems},'' \emph{arXiv preprint arXiv:2506.14298}, 2025.

\bibitem{shang2008sum}
X.~Shang, B.~Chen, and J.~Matyjas, ``{Sum capacity optimality of orthogonal
  transmissions in vector Gaussian multiple access channels},'' \emph{IEEE
  Trans. Wireless Commun.}, vol.~7, no.~11, pp. 4304--4311, Jul. 2008.

\bibitem{yu2006dual}
W.~Yu and R.~Lui, ``{Dual methods for nonconvex spectrum optimization of
  multicarrier systems},'' \emph{IEEE Trans. Commun.}, vol.~54, no.~7, pp.
  1310--1322, Jul. 2006.

\end{thebibliography}

% that's all folks

\end{document}